\documentclass[lettersize,journal]{IEEEtran}
\usepackage[ruled,linesnumbered,lined,noend]{algorithm2e}
\usepackage[switch]{lineno}
\usepackage{cite}
\usepackage{graphicx}
\usepackage{subfigure}
\usepackage{pifont} 
\usepackage{url}
\usepackage{color}
\usepackage{booktabs}
\usepackage{multirow}
\usepackage{latexsym,bm,amsmath,amssymb}
\usepackage{array}
\usepackage{algpseudocode}
\usepackage{lscape}
\usepackage{comment}
\usepackage{balance}
\usepackage[colorlinks,linkcolor = blue,citecolor = blue,citecolor = blue,filecolor = blue,urlcolor=blue]{hyperref}
\usepackage[most]{tcolorbox}
\usepackage{ragged2e}
\usepackage{listings}
\usepackage{tabularx}
\usepackage{amsthm}
\usepackage{colortbl}
\usepackage{xcolor}

\definecolor{lightgray}{gray}{0.9}

\newcommand{\tool}{\textsf{VeriGround}\xspace}

\definecolor{lightgreen}{RGB}{232, 245, 233}   
\definecolor{mediumgreen}{RGB}{165, 214, 167}   
\definecolor{darkgreen}{RGB}{129, 199, 132}      

\newtheorem{definition}{Definition}
\newtheorem{proposition}{Proposition}

\newtcolorbox{findingbox}{
  colback=mygray,
  colframe=mygray,
  boxrule=0pt,
  left=4pt, right=4pt, top=2pt, bottom=2pt,
  fontupper=\bfseries,
  before skip=6pt,
  after skip=2pt
}

\usepackage{listings}
\definecolor{light-gray}{gray}{0.85}
  
\newcommand{\code}[1]{\colorbox{light-gray}{\texttt{#1}}}

\definecolor{mygray}{gray}{.9}

\begin{document}

\title{From Mirage to Grounding: Towards Reliable Multimodal Circuit-to-Verilog Code Generation}

\author{Guang Yang, Xing Hu\IEEEauthorrefmark{1}\thanks{Corresponding author: Xing Hu.}, Xiang Chen, and Xin Xia

\IEEEcompsocitemizethanks{
\IEEEcompsocthanksitem Guang Yang is with the State Key Laboratory of Blockchain and Data Security, Zhejiang University, Hangzhou, China, and also with the Hangzhou High-Tech Zone (Binjiang) Institute of Blockchain and Data Security, Hangzhou, China. 
Xing Hu and Xin Xia are with the State Key Laboratory of Blockchain and Data Security, Zhejiang University, Hangzhou, China. 
Xiang Chen is with the School of Artificial Intelligence and Computer Science, Nantong University, Nantong, China. 
E-mail: novelyg@outlook.com, xinghu@zju.edu.cn, xin.xia@acm.org, xchencs@ntu.edu.cn.
}

\thanks{Manuscript received April 19, 2020; revised August xx, xxxx.}}

\markboth{IEEE TRANSACTIONS ON Software Engineering,~Vol.~XX, No.~XX, XX~2026}%
{From Mirage to Grounding: Towards Reliable Multimodal Circuit-to-Verilog Code Generation}

\IEEEtitleabstractindextext{
\begin{abstract}
\justifying
Multimodal large language models (MLLMs) are increasingly used to translate visual artifacts into code, from UI mockups into HTML to scientific plots into Python scripts.
A circuit diagram can be viewed as a \emph{visual domain-specific language} for hardware: it encodes timing, topology, and bit-level semantics that are invisible to casual inspection yet safety-critical once fabricated in silicon.
Translating such diagrams into register-transfer-level (RTL) code therefore represents an extreme reliability test for vision-to-code generation.
We reveal a phenomenon we call \emph{Mirage}: replacing a circuit diagram with a blank image leaves Pass@$k$ unchanged or even higher, because models bypass the visual input and instead exploit identifier semantics in the \texttt{module\_header} to retrieve canonical RTL templates.
This constitutes a new, highly covert class of defect in AI-assisted code generation that directly undermines MLLMs' trustworthiness.
To quantify the effect, we construct \textsc{C2VEval} and evaluate eight MLLMs under a paired \emph{Normal/Anony} protocol in which \emph{Anony} mode anonymizes all identifiers in both the diagram and the module header; Anony-mode scores drop sharply across all models, confirming that high Normal-mode accuracy is largely a Mirage.
We then propose \tool (4B), trained with identifier anonymization, refusal augmentation, and D-ORPO (Decision-Focused ORPO) preference alignment that up-weights pivotal generate-or-refuse tokens.
\tool achieves Functional Pass@1 of 46.11\%/42.51\% (Normal/Anony) with a False Refusal Rate of only 1.20\%/0.00\%, while maintaining $\geq$92\% Refusal Rate on blank images.
With only 4B parameters, \tool performs on par with GPT-5.4 under Normal and significantly outperforms all baselines under Anony, confirming genuine visual grounding.
The evaluation methodology and training recipe generalize beyond hardware: \textsc{C2VEval}'s paired protocol can benchmark any vision-to-code pipeline, and \tool's anonymization-refusal-alignment triad offers a transferable solution for balancing hallucination and over-refusal in AI code generation.
\end{abstract}

\begin{IEEEkeywords}
multimodal large language models, vision-to-code generation, visual grounding, software trustworthiness, Verilog
\end{IEEEkeywords}}

\maketitle

\IEEEdisplaynontitleabstractindextext
\IEEEpeerreviewmaketitle

\section{Introduction}
\label{sec:intro}

Image-to-code generation has become a central paradigm in AI-assisted coding.
From UI mockups rewritten as HTML~\cite{feng2021auto, xiao2024prototype2code, gui2025uicopilot, zhou2025declarui, xiao2025interaction2code, wan2025divide}, to scientific plots reverse-engineered into reproducible Python scripts~\cite{wu2025plot2code, zhao2025chartcoder, ouyang2026dscodebench}, multimodal large language models (MLLMs) are shifting the front-end of software creation from purely textual specifications toward richer visual artifacts.
In each of these tasks, the visual input can be viewed as a \emph{visual domain-specific language (visual DSL)}: a UI mockup specifies layout and interaction semantics, a chart encodes data and rendering logic, and each must be faithfully translated into executable code.
Recently, this paradigm has been extended to hardware: MLLMs are used to translate \emph{circuit diagrams} into synthesizable register-transfer-level (RTL) code, a task we call \emph{circuit-to-Verilog code generation}~\cite{wong2024vgv, zhang2026mgemmv}.
A circuit diagram is arguably the demanding visual DSL, because it encodes timing, topology, and bit-level semantics that are invisible to casual inspection yet safety-critical once fabricated in silicon.

\begin{figure}[t]
\centering
\includegraphics[width=0.48\textwidth]{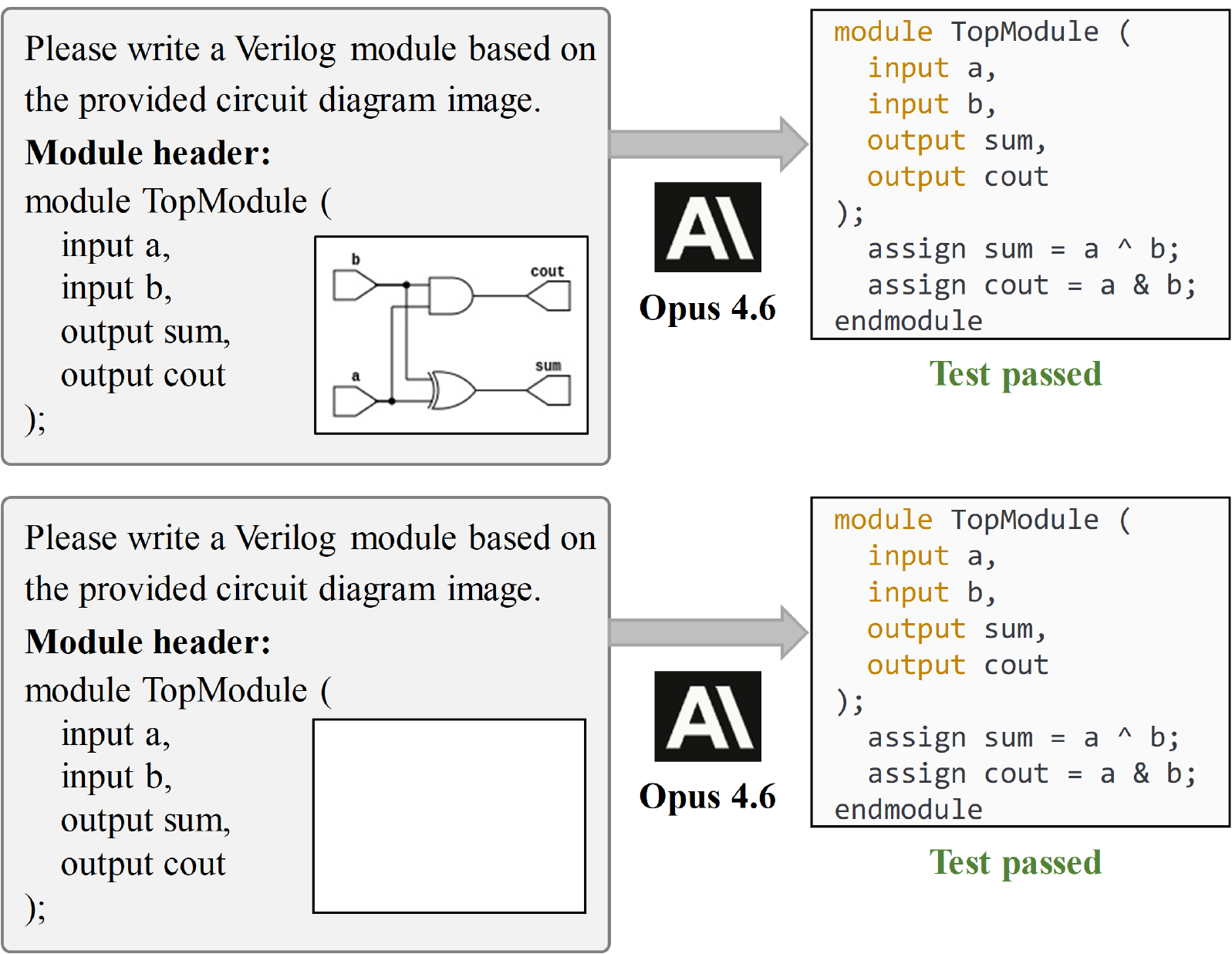}
\caption{Motivating example~1 of the \emph{Mirage} phenomenon. The model generates correct code regardless of whether the input contains the real circuit diagram or a blank image.}
\label{fig:example1}
\end{figure}
\begin{figure*}[t]
\centering
\includegraphics[width=0.8\textwidth]{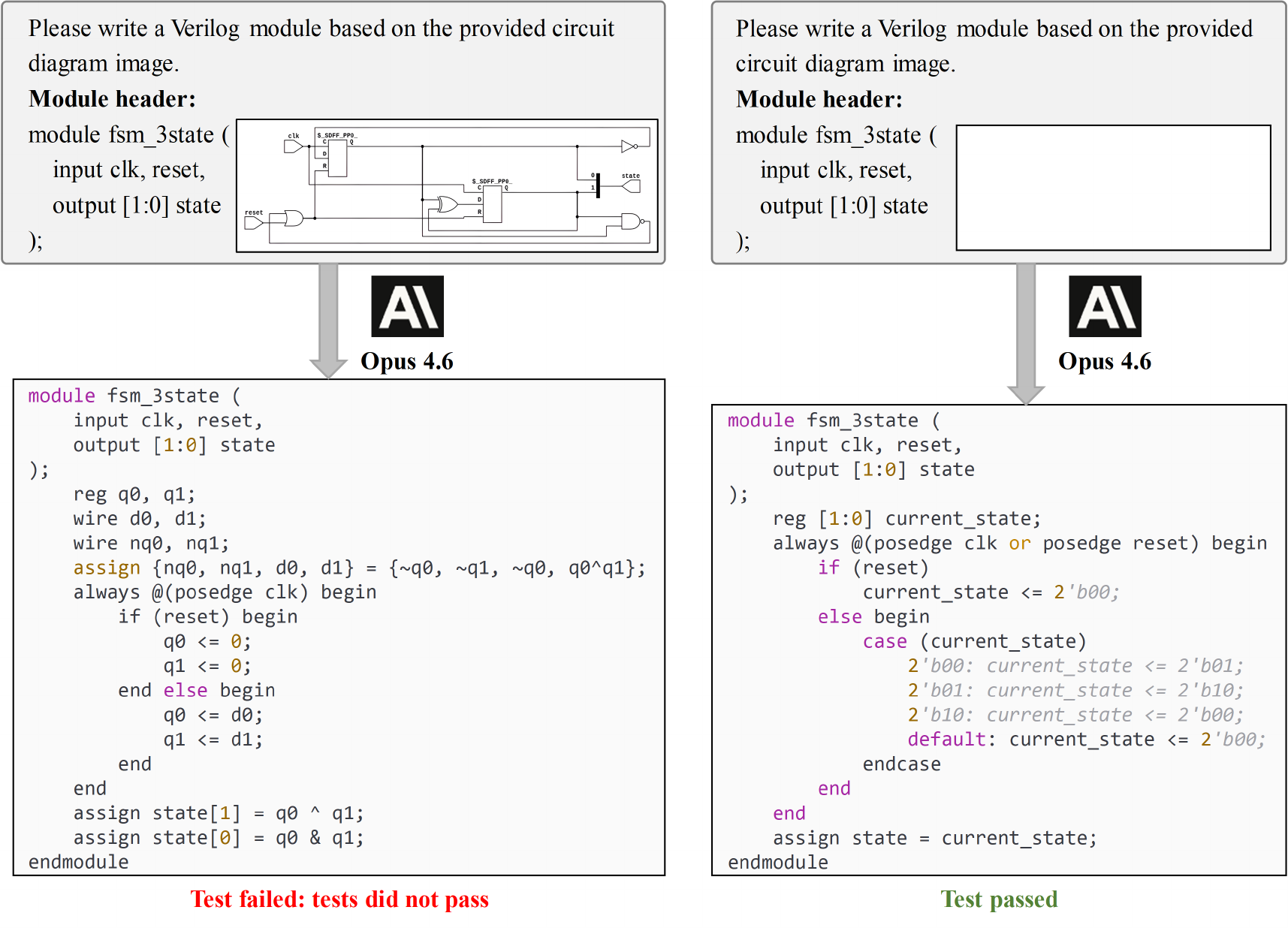}
\caption{Motivating example~2 of the \emph{Mirage} phenomenon. When the real circuit diagram is provided, the model generates incorrect Verilog that fails the testbench. When the diagram is replaced by a blank image while the \texttt{module\_header} is retained, the model instead produces correct code.}
\label{fig:example2}
\end{figure*}
  
While unreliable vision-to-code translation is a general concern across all visual DSLs, the consequences in the hardware domain are uniquely severe~\cite{yang2025large, pan2025survey}.
RTL sits at the very beginning of the silicon design flow and feeds downstream into synthesis, place-and-route, and fabrication; a single misread schematic can propagate silently through the entire toolchain and materialize as a silicon-level defect whose cost is orders of magnitude higher than a rendering or plotting error.
More broadly, if an AI code generator appears to ``understand'' a visual specification while actually bypassing it, the resulting code carries a covert correctness risk that conventional testing may not catch.
This directly threatens MLLMs' \emph{trustworthiness}, a core concern of the software engineering community.
Circuit-to-Verilog generation therefore serves as a rigorous proving ground: any reliability flaw exposed here is likely to manifest, in milder but equally insidious forms, in other vision-to-code pipelines.

Although recent studies have demonstrated the feasibility of using MLLMs to read circuit diagrams and generate Verilog code, a fundamental question remains largely unexplored: \emph{do existing MLLMs truly read circuit diagrams, or merely exploit textual shortcuts?}

Our motivating examples provide concrete evidence for this concern.
In both Fig.~\ref{fig:example1} and Fig.~\ref{fig:example2} the model under test is Opus~4.6, one of the current frontier code-generation models.
Following standard practice, the \texttt{module\_header} (module name, ports, and parameters) is provided as part of the prompt so that the generated interface matches the testbench and Pass@$k$ can be computed without name-mismatch failures.
Yet this seemingly innocuous input turns out to be a powerful textual shortcut.
In Fig.~\ref{fig:example1}, the model produces identical correct code for a half-adder whether the input contains the real circuit or a blank image: the module name \texttt{TopModule} is semantically vacuous, but the output ports \texttt{sum} and \texttt{cout} already reveal the target function, making the diagram redundant.
Fig.~\ref{fig:example2} is even more striking: for \texttt{fsm\_3state}, the model \emph{fails} with the real diagram yet \emph{succeeds} with a blank image, suggesting that the visual input can actively interfere with generation.
We term this failure mode the \emph{Mirage} phenomenon~\cite{asadi2026mirage}: high benchmark scores mask the fact that models rely on textual priors in the \texttt{module\_header} rather than genuinely grounding in the visual circuit topology.
Mirage constitutes a new, highly covert class of AI code-generation defect: the output may compile and even pass certain tests, yet it was never derived from the visual specification, leaving latent errors that surface only in untested scenarios.
A natural way to test this hypothesis is to anonymize the identifiers in both the \texttt{module\_header} and the circuit diagram, thereby stripping the semantic cues that enable such shortcuts.

Following this idea, we construct \textsc{C2VEval} (Circuit-to-Verilog Evaluation), a benchmark that samples problems from established Verilog code-generation benchmarks and renders each reference solution into a circuit diagram via \textsc{netlistsvg}\footnote{\url{https://github.com/nturley/netlistsvg}}, ensuring exact image-code correspondence.
\textsc{C2VEval} has two variants: \emph{Normal}, which retains the original identifiers in both the diagram and the module header, and \emph{Anony} (anonymized), which replaces all semantically loaded identifiers with positional placeholders and re-renders the diagram accordingly, preserving circuit topology while removing semantic cues.
We evaluate eight MLLMs, from 4B open-source models to frontier systems, under each variant in two modes: \emph{Original} (real diagram provided) and \emph{Mirage} (diagram replaced by a blank image, header retained).
The experimental results achieve the following three findings.
(i)~under \emph{Normal}, Mirage mode matches or exceeds Original on every model, suggesting that models largely bypass the circuit diagram and sometimes perform worse when it is present;
(ii)~under \emph{Anony}, the Mirage advantage reverses for seven of eight models, isolating identifier semantics in the header as the primary driver of Normal-mode performance;
and (iii)~genuine visual grounding accounts for only $\sim$8--9\% of samples, with the vast majority of tasks remaining unsolved once identifier shortcuts are removed.

Based on these findings, we first curate a large-scale circuit-diagram-to-Verilog dataset by mining high-quality Verilog projects from GitHub and converting each design into a circuit diagram.
We then introduce \tool (4B), a lightweight MLLM trained with three targeted interventions that together address the hallucination-vs-refusal trade-off common to all vision-to-code generators:
(i)~\emph{mixed supervised fine-tuning} on Original and Anony data to reduce reliance on semantically loaded identifiers;
(ii)~\emph{refusal augmentation} with blank-diagram and image-header-mismatch negatives, teaching the model to abstain when visual evidence is absent or inconsistent;
(iii)~\emph{D-ORPO (Decision-Focused ORPO) alignment}, which up-weights the first $K$ response tokens with a decision weight $\alpha$, concentrating the preference signal on the generate-or-refuse boundary and mitigating the over-refusal problem of standard ORPO~\cite{hong2024orpo}.

Extensive experiments demonstrate that \tool effectively alleviates the Mirage phenomenon.
With only 4B parameters, \tool achieves Functional Pass@1 of 46.11\% under \emph{Normal}, approaching GPT-5.4 (45.51\%) and surpassing GPT-4o (33.52\%) and MiMo-v2-omni (37.72\%); under \emph{Anony}, \tool reaches 42.51\%, significantly outperforming all baselines ($p < 0.001$, McNemar's test).
Meanwhile, the False Refusal Rate on valid inputs is reduced to 1.20\%/0.00\% (Normal/Anony) while the Refusal Rate on blank inputs remains above 92\%.

Beyond the hardware domain, the evaluation methodology and training recipe of \tool carry broader implications for vision-to-code generation as a whole.
The paired Normal/Anony protocol of \textsc{C2VEval} can serve as a general diagnostic for any image-to-code pipeline where textual prompts risk leaking semantic shortcuts, such as CSS class names in UI-to-HTML or axis labels in chart-to-Python tasks.
Likewise, the anonymization, refusal augmentation, and D-ORPO alignment triad provides a transferable recipe for calibrating the hallucination and over-refusal trade-off across visual DSLs, requiring only domain-appropriate identifier masking and negative-pair construction.

In summary, this paper makes four contributions:
\begin{enumerate}
  \item \textbf{Phenomenon.} We identify and document the \emph{Mirage} phenomenon: all eight evaluated MLLMs produce equal or higher scores when the circuit diagram is removed, and genuine visual grounding accounts for only $\sim$8--9\% of samples. 
  \item \textbf{Benchmark \& Evaluation.} We construct \textsc{C2VEval} with exact image-code correspondence and a paired Normal/Anony$\times$Original/Mirage protocol that isolates identifier semantics as the single variable. 
  \item \textbf{Method.} We propose \tool (4B), trained with identifier anonymization, refusal augmentation, and D-ORPO (Decision-Focused ORPO) alignment. 
  \item \textbf{Evaluation.} \tool achieves Functional Pass@1 of 46.11\%/42.51\% (\emph{Normal}/\emph{Anony}) with False Refusal Rates of only 1.20\%/0.00\%, while maintaining $\geq$92\% Refusal Rate on blank inputs. 
\end{enumerate}

To facilitate the replication of {\tool}, we make our source code, trained models, and benchmark publicly available on GitHub.\footnote{\url{https://github.com/NTDXYG/VeriGround}}

The remainder of this paper is organized as follows.
Section~\ref{sec:background} introduces the background and problem definition.
Section~\ref{sec:empirical} presents the empirical study that motivates the approach.
Section~\ref{sec:method} details the proposed method.
Section~\ref{sec:results} reports the experimental results.
Section~\ref{sec:discussion} discusses hyper-parameter sensitivity, mismatch refusal, and threats to validity.
Section~\ref{sec:related} reviews related work.
Section~\ref{sec:conclusion} concludes the paper.
\section{Background and Preliminaries}
\label{sec:background}

\subsection{Multimodal Large Language Models}

A multimodal large language model (MLLM)~\cite{wu2023multimodal, yin2024survey} extends a text-only LLM to process both visual and textual inputs.
Given a visual input $I$ and a textual prompt $T$, an MLLM $\mathcal{M}$ auto-regressively generates an output sequence $\hat{Y} = (y_1, \dots, y_L)$:
\begin{equation}
\label{eq:mllm}
  P(\hat{Y} \mid I, T;\, \theta) = \prod_{t=1}^{L} P(y_t \mid y_{<t}, I, T;\, \theta),
\end{equation}
where $\theta$ denotes the model parameters.
Existing MLLMs broadly fall into two architectural paradigms.

\textbf{Connector-based MLLMs}
adopt a three-component architecture.
A visual encoder (e.g., ViT~\cite{han2022survey, liu2021swin, khan2022transformers}) extracts a feature sequence $\mathbf{z}_v = \mathrm{Enc}(I)$;
a connector (typically an MLP) projects it into the LLM embedding space, yielding visual tokens $\mathbf{h}_v = \mathrm{Proj}(\mathbf{z}_v)$;
an LLM backbone then generates conditioned on $[\mathbf{h}_v;\, \mathbf{h}_t]$, where $\mathbf{h}_t = \mathrm{Embed}(T)$ is the text token embedding sequence.
Representative models include LLaVA~\cite{lillava, li2023llava, cocchi2025llava} and InternVL~\cite{chen2024internvl, wang2025internvl3}.

\textbf{Native multimodal MLLMs}
forgo the connector and jointly train vision and language components within a unified transformer.
A visual encoder produces $\mathbf{h}_v = \mathrm{Enc}(I)$ and a token embedding layer produces $\mathbf{h}_t = \mathrm{Embed}(T)$; both sequences are fed directly into a shared backbone without an intermediate projection, with decoupled parallel strategies for the heterogeneous modalities to maintain training efficiency.
Representative models include Gemini~\cite{team2023gemini, team2024gemini} and GPT-4o~\cite{hurst2024gpt}.

\subsection{Circuit-to-Verilog Code Generation}

We first present the MLLM-based formulation of Circuit-to-Verilog Code Generation.

\begin{definition}[Circuit-to-Verilog Code Generation]
\label{def:task}
Given a circuit diagram image $I$ and a module header $H$, the task is to generate a Verilog module body $\hat{V}$ such that the complete module $H \oplus \hat{V}$ is both syntactically valid and functionally equivalent to the reference implementation $H \oplus V^{*}$, where $\oplus$ denotes concatenation and $V^{*}$ is the ground-truth module body.
The textual prompt takes the form $T = (\mathit{instruction},\, H)$, where the instruction is a fixed task description.
Since the instruction is constant across all samples, we abbreviate the generation as:
\begin{equation}
\label{eq:task}
  \hat{V} = \mathcal{M}(I, H).
\end{equation}
\end{definition}

\noindent
We next detail the two variable inputs to $\mathcal{M}$.

\begin{definition}[Module Header]
\label{def:header}
A \emph{module header} $H$ specifies the external interface of a Verilog module~\cite{navabi1999verilog}:
\begin{equation}
  H \;=\; \texttt{module}\;\mathit{name}\;[\texttt{\#(}\mathit{params}\texttt{)}]\;\texttt{(}\mathit{ports}\texttt{);}
\end{equation}
where $\mathit{name}$ is the module identifier, $\mathit{params}$ is an optional parameter list, and $\mathit{ports}$ specifies input/output ports with their directions and bit-widths.
In standard practice, $H$ is always provided so that the generated interface matches the testbench and evaluation can proceed without name-mismatch failures.
\end{definition}

\begin{definition}[Circuit Diagram]
\label{def:circuit}
A \emph{circuit diagram} $I$ is a visual representation of the target circuit's topology, depicting gates, flip-flops, multiplexers, and their interconnections~\cite{mehler2014digital}.
In our setting, each diagram is rendered from $V^{*}$ via \textsc{netlistsvg}, so the labels in $I$ correspond exactly to the identifiers in $V^{*}$.
\end{definition}

\section{Empirical Study}
\label{sec:empirical}
\subsection{Benchmark Construction}
\label{sec:benchmark}

To investigate the Mirage phenomenon, we construct \textsc{C2VEval}, a circuit-to-Verilog benchmark with exact image-code correspondence.
The construction follows a three-stage pipeline.

\textbf{Stage~1: Seed dataset collection.}
We source problems from four established Verilog code-generation benchmarks: VerilogEval-v2~\cite{liu2023verilogeval, pinckney2025revisiting}, RTLLM-v2~\cite{lu2024rtllm, liu2024openllm}, ResBench~\cite{guo2025resbench}, and ArchXBench~\cite{purini2025archxbench}, which collectively span diverse circuit categories, each accompanied by a complete testbench.

\textbf{Stage~2: Reference code curation.}
For each seed problem, we generate candidate Verilog solutions with two frontier MLLMs (GPT-5.4 and Opus-4.6) and simulate each candidate against the corresponding testbench; only solutions passing all assertions are retained as verified reference implementations $V^{*}$.

\textbf{Stage~3: Diagram rendering and filtering.}
Each $V^{*}$ is rendered into a circuit diagram via \textsc{netlistsvg} and rasterized to JPEG at 96\,DPI, producing image $I$ whose labels correspond exactly to the identifiers in $V^{*}$ (Definition~\ref{def:circuit}).
Samples that fail to render or whose $I$ exceeds $2{,}048$ visual tokens under the Qwen-3.5 tokenizer are discarded.

\subsubsection{Normal and Anony variants}
The above pipeline yields the \emph{Normal} variant of \textsc{C2VEval}.
To test whether high Normal-mode accuracy is a Mirage driven by identifier semantics rather than genuine visual understanding, we further construct an \emph{Anony} (anonymized) variant: 
for each sample, the module name is replaced with a generic \texttt{module\_name}, and all port and parameter identifiers are replaced with positional placeholders \texttt{val\_0}, \texttt{val\_1}, \dots; 
the anonymized code is then re-rendered through the same Stage~3 pipeline to produce a new diagram $I_{\text{anon}}$ paired with an anonymized header $H_{\text{anon}}$.
The circuit topology is preserved; only the semantic cues are removed.
For example, the following Normal-mode header:
\begin{lstlisting}
module sync_fifo #(DEPTH=32, WIDTH=8)
    (clk, rst_n, wr_en, rd_en);
\end{lstlisting}
\vspace{-6pt}
is anonymized to:
\begin{lstlisting}
module module_name #(val_0=32, val_1=8)
    (val_2, val_3, val_4, val_5);
\end{lstlisting}
\vspace{-6pt}
The corresponding circuit diagram is re-rendered with the same positional placeholders, ensuring that the only way to produce correct code is to read the visual topology.

\begin{table}[t]
\small
\centering
\caption{Category distribution of \textsc{C2VEval}.}
\label{tab:benchmark_category}
\begin{tabular}{lrr}
\toprule
\textbf{Category} & \textbf{Count} & \textbf{Ratio} \\
\midrule
Basic Combinational Logic          & 81 & 48.5\% \\
Sequential Building Blocks         & 43 & 25.7\% \\
Finite State Machines              & 32 & 19.2\% \\
Mathematical Operations \& Algo.   & 11 &  6.6\% \\
\midrule
\textbf{Total}                     & \textbf{169} & 100\% \\
\bottomrule
\end{tabular}
\end{table}
    
\subsubsection{Benchmark statistics}
The resulting \textsc{C2VEval} comprises \textbf{169}~samples.
Each sample is a tuple $(I,\,H,\,V^{*},\,\mathcal{T},\,D)$ together with its anonymized counterpart $(I_{\text{anon}},\,H_{\text{anon}})$, where $I$ is the circuit diagram, $H$ is the module header (Definition~\ref{def:header}), $V^{*}$ is the reference module body, $\mathcal{T}$ is the testbench, and $D$ is the natural-language functional description.

Table~\ref{tab:benchmark_category} reports the category distribution: Basic Combinational Logic constitutes the largest share (48.5\%), followed by Sequential Building Blocks (25.7\%) and Finite State Machines (19.2\%), with Mathematical Operations \& Algorithms as the smallest category (6.6\%).
This distribution reflects the natural composition of introductory-to-intermediate RTL design tasks in the seed benchmarks.

\begin{figure}[t]
\centering
\includegraphics[width=0.48\textwidth]{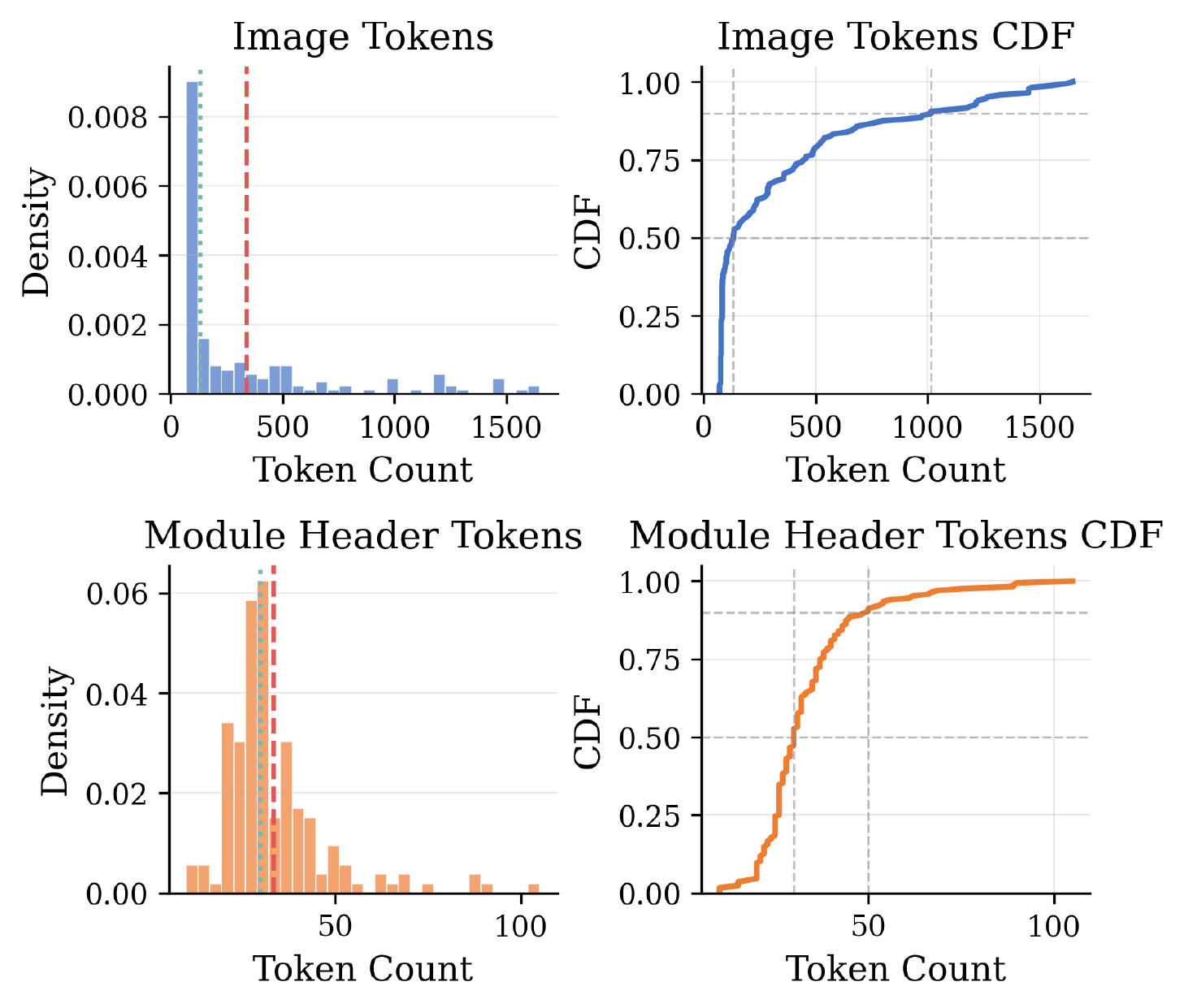}
\caption{Token-count distributions (left: density histograms; right: CDFs) for circuit-diagram images (top, blue) and module headers (bottom, orange) in \textsc{C2VEval}.}
\label{fig:token_dist}
\end{figure}

\begin{table*}[t]
\centering
\small
\caption{Pass@$k$ (\%) on \textsc{C2VEval} under the Normal and Anony benchmark variants.
Mirage mode replaces the circuit diagram with a blank image while the module header is retained.
\textbf{Bold} denotes the higher score between Original and Mirage in each pair.}
\label{tab:main_results}
\begin{tabular}{ll cccc cccc}
\toprule
\multirow{2}{*}{\textbf{Model}}
& \multirow{2}{*}{\textbf{Mode}}
& \multicolumn{4}{c}{\textbf{Normal}}
& \multicolumn{4}{c}{\textbf{Anony}} \\
\cmidrule(lr){3-4} \cmidrule(lr){5-6} \cmidrule(lr){7-8} \cmidrule(lr){9-10}
& & Syn. @1 & Syn. @5 & Func.@1 & Func.@5 & Syn. @1 & Syn. @5 & Func.@1 & Func.@5 \\
\midrule

\rowcolor{mygray}
{GPT-5.4}
& Original & 97.60 & -- & 45.51 & --
            & 93.41 & -- & \textbf{24.55} & -- \\
\rowcolor{mygray}
& Mirage   & \textbf{100.00} & -- & \textbf{47.90} & --
            & \textbf{100.00} & -- & 6.59 & -- \\
{GPT-4o}
& Original & 92.22 & -- & 39.52 & --
            & 79.64 & -- & \textbf{16.17} & -- \\
& Mirage   & \textbf{94.61} & -- & 39.52 & --
            & \textbf{94.01} & -- & 8.38 & -- \\
\rowcolor{mygray}
{MiMo-v2-omni}
& Original & 85.03 & -- & 37.72 & --
            & 79.64 & -- & \textbf{19.16} & -- \\
\rowcolor{mygray}
& Mirage   & \textbf{86.23} & -- & \textbf{41.32} & --
            & \textbf{82.63} & -- & 5.99 & -- \\
{Opus-4.6}
& Original & 98.20 & -- & 52.69 & --
            & 88.62 & -- & 11.38 & -- \\
& Mirage   & \textbf{100.00} & -- & \textbf{63.47} & --
            & \textbf{99.40} & -- & \textbf{14.97} & -- \\
\midrule
\rowcolor{mygray}
{EGM 4B}
& Original & 79.04 & 98.20 & 18.56 & 29.94
            & \textbf{50.90} & \textbf{93.41} & \textbf{4.19} & \textbf{10.78} \\
\rowcolor{mygray}
& Mirage   & \textbf{83.83} & \textbf{99.40} & \textbf{20.36} & \textbf{31.14}
            & 33.53 & 86.83 & 0.60 & 7.78 \\
{EGM 8B}
& Original & 37.72 & 64.67 & 10.18 & 23.35
            & 22.16 & 71.26 & 0.60 & \textbf{8.38} \\
& Mirage   & \textbf{83.83} & \textbf{95.81} & \textbf{22.75} & \textbf{36.53}
            & \textbf{23.95} & \textbf{73.65} & 0.60 & 5.99 \\
\rowcolor{mygray}
{Qwen3.5 4B}
& Original & 38.32 & 76.50 &  7.78 & 19.16
            & 35.33 & 78.44 & \textbf{5.99} & \textbf{12.57} \\
\rowcolor{mygray}
& Mirage   & \textbf{56.89} & \textbf{94.10} & \textbf{10.78} & \textbf{25.75}
            & \textbf{61.68} & \textbf{98.80} & 2.40 & 5.39 \\
{Qwen3.5 9B}
& Original & 70.66 & 97.60 & 14.97 & 32.34
            & 62.28 & 92.22 & \textbf{8.38} & \textbf{14.97} \\
& Mirage   & \textbf{76.05} & \textbf{99.40} & \textbf{19.76} & \textbf{38.92}
            & \textbf{79.04} & \textbf{100.00} & 3.59 & 7.78 \\
\bottomrule
\end{tabular}
\end{table*}

Fig.~\ref{fig:token_dist} visualizes the token-count distributions of the two inputs.
Circuit-diagram images exhibit a heavy right-skewed distribution ($\mu = 340.6$, median~$= 133.0$): approximately half of the diagrams encode within 133 tokens, yet a few complex designs exceed 1{,}500 tokens.
Module headers, by contrast, are remarkably compact ($\mu = 33.5$, median~$= 30.0$), with over 90\% falling below 60 tokens.
This order-of-magnitude asymmetry is noteworthy: the header occupies less than one-tenth of the visual token budget, yet it encodes the module name, port names, and parameter values, which collectively carry rich semantic information.

\subsection{Empirical Setup}
\label{sec:setup}

\subsubsection{Evaluated Models}

We evaluate eight MLLMs spanning 4B to frontier-scale parameters.
The \emph{proprietary} group includes GPT-4o~\cite{openai_gpt4o_docs}, GPT-5.4~\cite{openai_gpt54_docs}, Opus-4.6~\cite{anthropic_claude_opus_46}, and MiMo-v2-omni~\cite{MiMo}.
The \emph{open-source} group includes EGM-4B, EGM-8B~\cite{zhan2026EGM}, Qwen3.5-4B, and Qwen3.5-9B~\cite{qwen3.5}.
Each model is evaluated under both the Normal and Anony variants of \textsc{C2VEval}.

\subsubsection{Evaluation Metrics}

We adopt two Pass@$k$ metrics computed with the unbiased estimator~\cite{chen2021evaluating, yang2026semantic}.
\textbf{Syntax Pass@$k$} measures the probability that at least one of $k$ sampled completions yields a syntactically compilable Verilog module.
\textbf{Functional Pass@$k$} measures the probability that at least one of $k$ completions passes all testbench assertions under simulation.
Functional correctness implies syntactic validity, so Functional Pass@$k$ $\leq$ Syntax Pass@$k$.

\subsubsection{Implementation Details}
Due to API cost constraints, proprietary models are reported as Pass@$1$ only.
Open-source models are served locally; we sample $n = 5$ completions per problem at temperature~$= 0.7$ and report both Pass@$1$ and Pass@$5$.
All compilation and simulation are performed with Icarus Verilog.

\begin{table}[t]
\centering
\small
\caption{Sample-level Functional Pass@1 breakdown (\%) on \textsc{C2VEval} (167 samples). Each sample is categorized by whether Original and Mirage each produce correct code.}
\label{tab:sample_breakdown}
\begin{tabular}{l rrrr}
\toprule
\textbf{Model}
    & \textbf{Both} & \textbf{Original} & \textbf{Mirage} & \textbf{Neither} \\
\midrule
\multicolumn{5}{l}{\textbf{Normal}} \\
\midrule
\rowcolor{mygray}
GPT-5.4      & 35.3 & 10.2 & 12.6 & 41.9 \\
GPT-4o       & 30.5 &  9.0 &  9.0 & 51.5 \\
\rowcolor{mygray}
MiMo-v2      & 26.3 & 11.4 & 15.0 & 47.3 \\
Opus-4.6     & 43.1 &  9.6 & 20.4 & 26.9 \\
\midrule
\rowcolor{mygray}
EGM 4B       & 10.2 &  8.4 & 10.2 & 71.3 \\
EGM 8B       &  7.8 &  2.4 & 15.0 & 74.9 \\
\rowcolor{mygray}
Qwen3.5 4B   &  2.4 &  5.4 &  8.4 & 83.8 \\
Qwen3.5 9B   &  5.4 &  9.6 & 14.4 & 70.7 \\
\midrule
\textbf{Avg.}& \textbf{20.1} & \textbf{8.2} & \textbf{13.1} & \textbf{58.5} \\
\midrule
\multicolumn{5}{l}{\textbf{Anony}} \\
\midrule
\rowcolor{mygray}
GPT-5.4      &  6.0 & 18.6 &  0.6 & 74.9 \\
GPT-4o       &  4.2 & 12.0 &  4.2 & 79.6 \\
\rowcolor{mygray}
MiMo-v2      &  2.4 & 16.8 &  3.6 & 77.2 \\
Opus-4.6     &  4.8 &  6.6 & 10.2 & 78.4 \\
\midrule
\rowcolor{mygray}
EGM 4B       &  0.0 &  4.2 &  0.6 & 95.2 \\
EGM 8B       &  0.0 &  0.6 &  0.6 & 98.8 \\
\rowcolor{mygray}
Qwen3.5 4B   &  0.6 &  5.4 &  1.8 & 92.2 \\
Qwen3.5 9B   &  1.8 &  6.6 &  1.8 & 89.8 \\
\midrule
\textbf{Avg.}& \textbf{2.5} & \textbf{8.8} & \textbf{2.9} & \textbf{85.8} \\
\bottomrule
\end{tabular}
\end{table}

\begin{figure*}[t]
\centering
\includegraphics[width=0.95\textwidth]{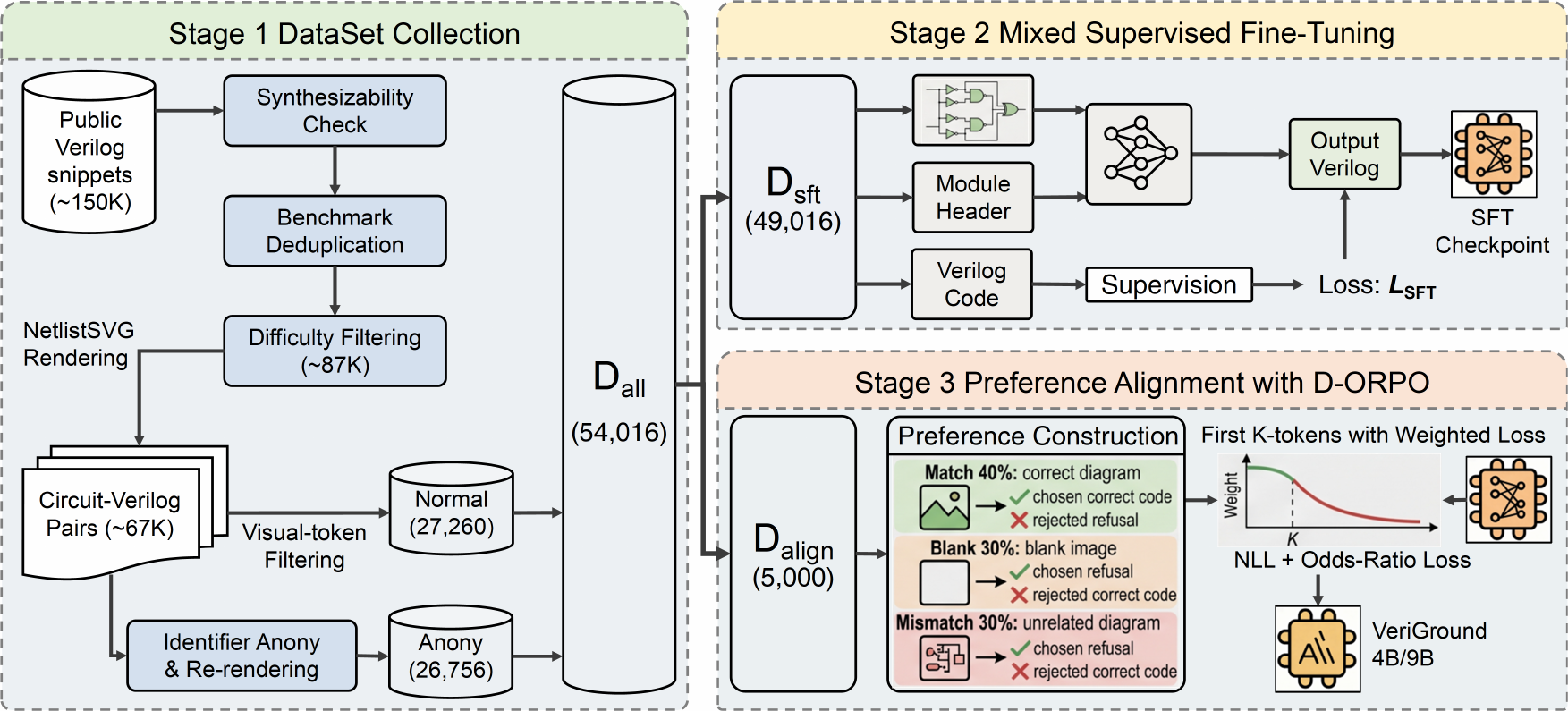}
\caption{Training pipeline of \tool. Stage~1 collects and anonymizes circuit-diagram/Verilog pairs to form a mixed Normal\,+\,Anony corpus. Stage~2 performs supervised fine-tuning on this corpus, forcing the model to ground in visual topology rather than identifier semantics. Stage~3 constructs preference pairs with refusal augmentation and applies D-ORPO alignment to balance generation quality against hallucination and over-refusal.}
\label{fig:pipeline}
\end{figure*}

\subsection{Empirical Findings}
\label{sec:findings}

\begin{findingbox}
Finding 1: The Mirage phenomenon is universal: removing the circuit diagram does not degrade, and often improves, code generation.
\end{findingbox}
Under the Normal benchmark, Mirage mode matches or exceeds Original mode on \emph{every} model across all metrics (Table~\ref{tab:main_results}).
The most extreme case is EGM~8B, whose Syntax Pass@1 jumps by 46 points upon removing the circuit diagram, while Opus-4.6 sees a Functional gain of over 10 points.
The same trend holds at Pass@5: all four open-source models see Mirage surpass Original on every metric, ruling out single-sample variance.
A sample-level decomposition (Table~\ref{tab:sample_breakdown}) corroborates this pattern: on average, 13.1\% of samples are solved by Mirage only versus 8.2\% by Original only, meaning the diagram actively \emph{impairs} generation on 60\% more samples than it aids.

\begin{findingbox}
Finding 2: Identifier semantics in the module header, not visual understanding, are the primary driver of Normal-mode performance.
\end{findingbox}
Anonymization strips semantic identifiers while preserving circuit topology.
Under the Anony benchmark, the Mirage advantage vanishes: Original surpasses Mirage on Functional Pass@1 for seven of eight models (Table~\ref{tab:main_results}), and this reversal extends to Pass@5 for all four open-source models.
At the sample level (Table~\ref{tab:sample_breakdown}), the Both category, i.e., samples solved by both modes, collapses from 20.1\% to 2.5\%, confirming that the vast majority of Normal-mode joint successes were driven by header-based shortcuts rather than diagram comprehension.
Correspondingly, aggregate Functional Pass@1 drops drastically upon anonymization (e.g., GPT-5.4: $45.51 \to 24.55$; Opus-4.6: $52.69 \to 11.38$), further isolating identifier semantics as the dominant factor.

\begin{findingbox}
Finding 3: Genuine visual grounding is extremely limited, accounting for only $\sim$8--9\% of samples.
\end{findingbox}
The Original-only rate in Table~\ref{tab:sample_breakdown}, which captures samples where the diagram is necessary and sufficient for correct generation, remains nearly identical across Normal (8.2\%) and Anony (8.8\%).
This stability suggests that approximately 8--9\% of samples contain visual cues that models can genuinely extract irrespective of whether identifiers carry semantic meaning.
Beyond this small fraction, current MLLMs largely fail to leverage circuit diagrams: under Anony, 85.8\% of samples are solved by neither mode, indicating that once identifier shortcuts are removed, the overwhelming majority of circuit-to-code tasks remain out of reach.
\section{Method}
\label{sec:method}

Figure~\ref{fig:pipeline} illustrates the overall training pipeline of \tool, which consists of three stages: (i)~training data collection and anonymisation, (ii)~supervised fine-tuning (SFT) on the mixed Normal\,+\,Anony corpus, and (iii)~preference alignment with D-ORPO.

\subsection{Dataset Collection}
\label{sec:dataset}

We construct a large-scale circuit-diagram-to-Verilog training corpus through a six-step pipeline. Steps~1--4 follow the data-curation workflow of CodeV-R1~\cite{zhuqimeng}; Steps~5--6 extend it with diagram rendering and visual-token budgeting.

\textbf{Step~1: Source collection.}
Approximately 150K synthesisable Verilog snippets are harvested from public GitHub repositories, spanning a broad spectrum of hardware designs.

\textbf{Step~2: Synthesisability verification.}
Each snippet is compiled with \textsc{Yosys}~\cite{wolf2013yosys} and discarded if synthesis fails, ensuring that every retained sample corresponds to a valid hardware design.

\textbf{Step~3: Benchmark decontamination.}
To prevent data leakage, we remove any sample whose Rouge-L~\cite{lin2004rouge} similarity with the test sets exceeds $0.5$.

\textbf{Step~4: Difficulty filtering.}
Samples for which Qwen2.5-Coder-7B/32B-Instruct can produce a functionally correct solution within five attempts (verified via formal equivalence checking~\cite{zhuqimeng}) are discarded, retaining approximately 87K non-trivial instances.

\textbf{Step~5: Circuit-diagram rendering.}
Each retained reference implementation $V^{*}$ is synthesised into a gate-level netlist and rendered as an SVG schematic via \textsc{netlistsvg}. Samples that fail to render are excluded, leaving approximately 67K Circuit--Verilog pairs.

\textbf{Step~6: Visual-token budget filtering.}
The SVG schematics are rasterised to JPEG at 96\,DPI. Samples whose image exceeds 2{,}048 visual tokens under the Qwen-3.5 tokenizer are discarded, yielding the \textbf{Normal} training set of \textbf{27{,}260} samples.

\textbf{Anonymised variant.}
Applying the same anonymisation procedure used for \textsc{C2VEval} (Section~\ref{sec:benchmark}), we replace all semantically loaded identifiers with positional placeholders and re-render the corresponding circuit diagrams, producing the \textbf{Anony} training set of \textbf{26{,}756} samples. The Normal and Anony sets are merged into a unified corpus $\mathcal{D}_{\mathrm{all}}$ of \textbf{54{,}016} samples.

\subsection{Mixed Supervised Fine-Tuning}
\label{sec:sft}

We partition $\mathcal{D}_{\mathrm{all}}$ into two disjoint subsets: a seed pool $\mathcal{D}_{\mathrm{align}}$ of 5{,}000 samples reserved for the subsequent alignment stage (Section~\ref{sec:dorpo}), and $\mathcal{D}_{\mathrm{sft}} = \mathcal{D}_{\mathrm{all}} \setminus \mathcal{D}_{\mathrm{align}}$ (\textbf{49{,}016} samples) used for supervised fine-tuning.

Each SFT instance is a triple $(I, H, V^{*})$, where $I$ denotes the circuit diagram, $H$ the module header, and $V^{*}$ the reference Verilog module body. The model $\mathcal{M}_\theta$ is optimised with the standard autoregressive objective:
\begin{equation}
  \mathcal{L}_{\mathrm{SFT}} =
    -\sum_{t=1}^{|V^{*}|} \log\, p_\theta(v_t^{*} \mid I, H, v_{<t}^{*}),
  \label{eq:sft}
\end{equation}
where $v_t^{*}$ denotes the $t$-th token of $V^{*}$. 

Because $\mathcal{D}_{\mathrm{sft}}$ interleaves \emph{Normal} and \emph{Anony} samples in roughly equal proportion, the same module header $H$ appears with both semantically loaded and anonymised identifiers across different instances. 
This compels the model to ground its predictions in the visual topology of $I$ rather than memorising identifier-to-template mappings, as a direct countermeasure to the Mirage phenomenon identified in Section~\ref{sec:findings}.

\subsection{Preference Pair Construction}
\label{sec:pairs}

The alignment stage operates on preference pairs $(y_w, y_l)$ (chosen vs.\ rejected), each conditioned on an input $(I, H)$. A well-calibrated circuit-to-Verilog model should produce correct code when the diagram faithfully depicts the target circuit, and \emph{refuse to answer} when the visual evidence is absent or inconsistent with the header. All three categories share the unified prompt and refusal templates shown in Figure~\ref{fig:prompt_template}. We derive three complementary categories of preference pairs from $\mathcal{D}_{\mathrm{align}}$, summarised in Table~\ref{tab:pref_pairs}.

\begin{figure}[t]
\centering
\begin{tcolorbox}[colback=lightgray, colframe=black!50, title={\small\textbf{Prompt Template}}, fonttitle=\bfseries\small, left=4pt, right=4pt, top=2pt, bottom=2pt]
\ttfamily\small
Please write a Verilog module based on the provided circuit diagram image. Return only the Verilog code, without any explanation.\\[4pt]
For example:\\
\texttt{```verilog}\\
\texttt{your Verilog code here}\\
\texttt{```}\\[4pt]
Module header (must not be changed):\\
\texttt{\{module\_header\}}
\end{tcolorbox}
\vspace{2pt}
\begin{tcolorbox}[colback=lightgray, colframe=black!50, title={\small\textbf{Refusal Response Template}}, fonttitle=\bfseries\small, left=4pt, right=4pt, top=2pt, bottom=2pt]
\ttfamily\small
Based on the provided circuit diagram, I cannot accurately determine the Verilog implementation.\\[4pt]
The module header provided is:\\
\texttt{\{module\_header\}}\\[4pt]
However, the provided image does not match the given module header, so I cannot generate the correct Verilog code with confidence.
\end{tcolorbox}
\caption{Prompt and refusal templates used for preference-pair construction. \texttt{\{module\_header\}} is instantiated with the sample-specific header $H$.}
\label{fig:prompt_template}
\end{figure}

\begin{table}[t]
\centering
\footnotesize
\caption{Preference-pair composition for the alignment stage.}
\label{tab:pref_pairs}
\begin{tabular}{clccr}
\toprule
\textbf{Category} & \textbf{Image condition} & \textbf{Chosen}
  & \textbf{Rejected} & \textbf{Ratio} \\
\midrule
\textsc{Match}    & Correct diagram     & Verilog & Refusal      & 40\% \\
\textsc{Blank}    & Blank image & Refusal      & Verilog & 30\% \\
\textsc{Mismatch} & Unrelated diagram   & Refusal      & Verilog & 30\% \\
\bottomrule
\end{tabular}
\end{table}

\paragraph{Category \textsc{Match} (40\%)}
The input pairs a \emph{matching} circuit diagram $I$ with its corresponding module header $H$.
The chosen response $y_w$ is the reference module body $V^{*}$; the rejected response $y_l$ follows the refusal template in Figure~\ref{fig:prompt_template}.
This category reinforces the model's generation capability, ensuring that valid visual evidence is not spuriously refused.

\paragraph{Category \textsc{Blank} (30\%)}
The circuit diagram is replaced with a blank white image $I_{\varnothing}$, while the module header $H$ is retained. The chosen response follows the refusal template; the rejected response is the header-matching code $V^{*}$.
By explicitly penalising code generation in the absence of visual input, this category directly targets the Mirage shortcut.

\paragraph{Category \textsc{Mismatch} (30\%)}
The circuit diagram $I'$ is sampled from a different instance in $\mathcal{D}_{\mathrm{align}}$ such that $I'$ depicts a circuit unrelated to $H$.
Refusal is again chosen over the header-matching code.
This category trains the model to detect semantic inconsistency between the visual and textual modalities, rather than defaulting to header-driven generation.

\textbf{Ratio design.}
Each source sample in $\mathcal{D}_{\mathrm{align}}$ naturally yields exactly one \textsc{Match} pair, one \textsc{Blank} pair, and one \textsc{Mismatch} pair, giving a raw ratio of 1\,:\,1\,:\,1.
We observe that this equal split over-represents refusal-preferred pairs (two out of three categories choose refusal), biasing the model toward over-refusal.
To counterbalance, we up-sample \textsc{Match} to 40\% and assign 30\% each to \textsc{Blank} and \textsc{Mismatch}, yielding a 4\,:\,3\,:\,3 split that keeps the three categories close in size while giving the generation-preferred category a slight majority.

\subsection{Decision-Focused ORPO}
\label{sec:dorpo}


\subsubsection{ORPO Background}
Preference alignment methods such as DPO~\cite{rafailov2023direct} and GRPO~\cite{shao2024deepseekmath} impose significant overhead on multimodal models: DPO doubles memory by maintaining a frozen reference model, while GRPO requires costly online rollouts with high-resolution image encoding.
ORPO~\cite{hong2024orpo} avoids both costs by unifying supervised learning and preference optimisation in a single, reference-free objective.

Given a preference pair $(y_w, y_l)$ conditioned on input $x$, ORPO combines the negative log-likelihood (NLL) of the chosen response with an odds-ratio (OR) penalty:
\begin{equation}
  \mathcal{L}_{\mathrm{ORPO}} = \underbrace{-\frac{1}{|y_w|}\sum_{t} \log p_\theta(y_{w,t} \mid x, y_{w,<t})}_{\mathcal{L}_{\mathrm{NLL}}} \;+\; \beta \cdot \mathcal{L}_{\mathrm{OR}},
  \label{eq:orpo}
\end{equation}
where $\beta$ controls the preference strength and the OR term is
\begin{equation}
  \mathcal{L}_{\mathrm{OR}} = -\log \sigma\!\Big(\log \frac{\mathrm{odds}_\theta(y_w \mid x)}{\mathrm{odds}_\theta(y_l \mid x)}\Big),
  \label{eq:or}
\end{equation}
with $\mathrm{odds}_\theta(y \mid x) \triangleq p_\theta(y \mid x)\,/\,(1 - p_\theta(y \mid x))$.
In practice, the sequence-level log-probability is replaced by its token-level average:
\begin{equation}
  \bar{\ell}_\theta(y \mid x) = \frac{1}{T}\sum_{t=1}^{T} \log p_\theta(y_t \mid x, y_{<t}),
  \label{eq:avg_logp}
\end{equation}
where $T = |y|$.
Note that every token contributes equally to $\bar{\ell}_\theta$; we revisit this assumption below.

\subsubsection{Motivation}
In our generate-or-refuse setting, the binary decision is fully determined by the \emph{first few response tokens} (cf.\ Figure~\ref{fig:prompt_template}): once the model emits \code{```verilog} it is committed to code generation, whereas a refusal opens with \code{Based on the ..., I cannot ...}.
This creates a \emph{length asymmetry}: code responses span hundreds of tokens while refusals contain only tens, so the uniform weighting of Eq.~\ref{eq:avg_logp} dilutes the odds-ratio gradient on the pivotal opening tokens.

We observe that this dilution leads to \emph{over-refusal}: the model favours the shorter refusal path even for valid inputs, because the uniform odds ratio provides insufficient gradient on the initial decision to generate.

\subsubsection{Formulation}
To address this, we propose \textbf{D-ORPO} (\textbf{D}ecision-Focused ORPO), which assigns a higher weight $\alpha > 1$ to the first $K$ response tokens in both the NLL and OR objectives.
Let $r = \min\{t : y_t \text{ is a response token}\}$ denote the index of the first response token.

We define a per-token weight function:
\begin{equation}
  w_t =
  \begin{cases}
    \alpha, & \text{if } r \leq t < r + K, \\
    1, & \text{otherwise},
  \end{cases}
  \label{eq:weight}
\end{equation}
where $K$ is the \emph{decision window size} and $\alpha$ is the \emph{decision weight}.

The weighted average log-probability replaces Eq.~\ref{eq:avg_logp}:
\begin{equation}
  \bar{\ell}^{\,\text{D}}_\theta(y \mid x) = \frac{\sum_{t=1}^{T} w_t \cdot \log p_\theta(y_t \mid x, y_{<t})}{\sum_{t=1}^{T} w_t}.
  \label{eq:weighted_logp}
\end{equation}

Substituting $\bar{\ell}^{\,\text{D}}_\theta$ for $\bar{\ell}_\theta$ in Eqs.~\ref{eq:orpo} and~\ref{eq:or} yields the D-ORPO objective:
\begin{equation}
  \mathcal{L}_{\text{D-ORPO}} = \mathcal{L}^{\,\text{D}}_{\mathrm{NLL}} + \beta \cdot \mathcal{L}^{\,\text{D}}_{\mathrm{OR}}.
  \label{eq:dorpo}
\end{equation}
When $\alpha = 1$, D-ORPO reduces to standard ORPO, confirming it as a strict generalisation.

\subsubsection{Theoretical Analysis}
\begin{definition}[Decision Gradient Fraction]
\label{def:dgf}
For a response of length $T$ with decision window $K \leq T$ and weight $\alpha \geq 1$, the \emph{decision gradient fraction}
\begin{equation}
  \phi(T, K, \alpha) \;\triangleq\; \frac{\alpha K}{\alpha K + (T - K)}
  \label{eq:phi}
\end{equation}
measures the fractional contribution of the $K$ decision tokens to the gradient of $\bar{\ell}^{\,\text{D}}_\theta$ (Eq.~\ref{eq:weighted_logp}).
When $\alpha = 1$, this reduces to $K / T$.
\end{definition}

\begin{proposition}[Gradient Rebalancing]
\label{prop:rebalance}
Let $T_c > T_r > K$ denote the lengths of a code response and a refusal response, respectively.
The \emph{decision gradient imbalance ratio}
\begin{equation}
  \Gamma(\alpha) \;\triangleq\; \frac{\phi(T_r, K, \alpha)}{\phi(T_c, K, \alpha)}
  \;=\; \frac{\alpha K + (T_c - K)}{\alpha K + (T_r - K)}
  \label{eq:gamma}
\end{equation}
satisfies:
\begin{enumerate}
  \item $\Gamma(1) = T_c / T_r$;
  \item $\Gamma(\alpha)$ is strictly decreasing in $\alpha$ for all $\alpha \geq 1$;
  \item $\lim_{\alpha \to \infty} \Gamma(\alpha) = 1$.
\end{enumerate}
\end{proposition}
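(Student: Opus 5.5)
The plan is to first verify the closed form in Eq.~\ref{eq:gamma} directly from Definition~\ref{def:dgf}, and then treat $\Gamma$ as a M\"obius (linear-fractional) function of $\alpha$. Writing $\phi(T,K,\alpha) = \alpha K / (\alpha K + T - K)$, the common numerator $\alpha K$ cancels in the ratio $\phi(T_r,K,\alpha)/\phi(T_c,K,\alpha)$. This gives $\Gamma(\alpha) = (\alpha K + T_c - K)/(\alpha K + T_r - K)$. Both denominators are strictly positive for $\alpha \ge 1$, because $T_r > K$.

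For item~1, I will substitute $\alpha = 1$. The numerator becomes $T_c$ and the denominator becomes $T_r$, so $\Gamma(1) = T_c/T_r$. This is also consistent with the remark that $\phi$ reduces to $K/T$ at $\alpha = 1$.

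For item~2, I will set $a = T_c - K$ and $b = T_r - K$. The hypotheses $T_c > T_r > K$ give $a > b > 0$. Then I rewrite $\Gamma(\alpha) = 1 + (a-b)/(\alpha K + b)$. Since $a - b = T_c - T_r > 0$ and the denominator $\alpha K + b$ is positive and strictly increasing in $\alpha$ (because $K > 0$), the second term is strictly decreasing. Hence $\Gamma$ is strictly decreasing on $\alpha \ge 1$. As an alternative check, the derivative is $\Gamma'(\alpha) = K(b-a)/(\alpha K + b)^2 < 0$.

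For item~3, the same decomposition gives $(a-b)/(\alpha K + b) \to 0$ as $\alpha \to \infty$, so $\Gamma(\alpha) \to 1$. Together with item~2, this also shows $\Gamma(\alpha) > 1$ for all finite $\alpha$: the imbalance shrinks monotonically toward parity but never inverts.

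There is no real obstacle here. The only point needing care is the implicit assumption $K \ge 1$ (a positive window), which is needed for strict monotonicity. With $K = 0$, $\Gamma$ would be the constant $T_c/T_r$. I will state this explicitly, noting that $K$ is a positive integer by construction of the decision window in Eq.~\ref{eq:weight}.
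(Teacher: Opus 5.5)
Your proof is correct and follows the paper's argument: direct substitution for item 1, the sign of $\Gamma'(\alpha) = K(T_r - T_c)/(\alpha K + T_r - K)^2$ for item 2, and the limit for item 3. The only differences are cosmetic. You use the decomposition $\Gamma(\alpha) = 1 + (T_c - T_r)/(\alpha K + T_r - K)$, which gives items 2 and 3 in one step and also shows $\Gamma > 1$. You also derive the closed form from the definition of $\phi$ and state explicitly the assumption $K \geq 1$, which the paper's derivative argument uses without saying so.
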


\begin{proof}
(1) follows by direct substitution.
For (2), $d\Gamma / d\alpha = K(T_r - T_c) / [\alpha K + (T_r - K)]^{2} < 0$ since $T_r < T_c$.
For (3), dividing numerator and denominator by $\alpha K$ gives $\Gamma \to 1$.
\end{proof}

Property~(1) quantifies the imbalance under standard ORPO: the decision gradient fraction for the short refusal is $T_c / T_r$ times that for the long code response, causing the OR objective to steer the decision boundary toward refusal far more strongly than toward generation.
Properties~(2) and~(3) guarantee that D-ORPO monotonically reduces this imbalance toward unity as $\alpha$ increases.
Algorithm~\ref{alg:dorpo} summarises the D-ORPO training procedure.
\begin{algorithm}[t]
\caption{D-ORPO Training}
\label{alg:dorpo}
\KwIn{SFT-initialised model $\mathcal{M}_\theta$; preference dataset $\mathcal{D}_{\mathrm{pref}}$; decision window $K$; decision weight $\alpha$; OR coefficient $\beta$}
\KwOut{Aligned model $\mathcal{M}_\theta$}
\For{each mini-batch $\{(x_i, y_{w}^{(i)}, y_{l}^{(i)})\}_{i=1}^{B}$ from $\mathcal{D}_{\mathrm{pref}}$}{
  Forward pass: compute per-token log-probs $\log p_\theta(y_t \mid x, y_{<t})$ for both $y_w$ and $y_l$\;
  Compute per-token weights $w_t$ via Eq.~\ref{eq:weight}\;
  Compute $\bar{\ell}^{\,\text{D}}_\theta(y_w \mid x)$ and $\bar{\ell}^{\,\text{D}}_\theta(y_l \mid x)$ via Eq.~\ref{eq:weighted_logp}\;
  Compute $\mathcal{L}_{\text{D-ORPO}}$ via Eq.~\ref{eq:dorpo}\;
  Update $\theta$ via back-propagation on $\mathcal{L}_{\text{D-ORPO}}$\;
}
\end{algorithm}

\begin{table*}[t]
\centering
\small
\caption{Pass@$k$ (\%) of \tool ablation variants on \textsc{C2VEval} under the paired Normal/Anony $\times$ Original/Mirage protocol.
For Original rows, \textbf{bold} = highest (best); for Mirage rows, \textbf{bold} = lowest (best).}
\label{tab:rq1_4b}
\begin{tabular}{ll cccc cccc}
\toprule
\multirow{2}{*}{\textbf{Variant}}
& \multirow{2}{*}{\textbf{Mode}}
& \multicolumn{4}{c}{\textbf{Normal}}
& \multicolumn{4}{c}{\textbf{Anony}} \\
\cmidrule(lr){3-4} \cmidrule(lr){5-6} \cmidrule(lr){7-8} \cmidrule(lr){9-10}
& & Syn. @1 & Syn. @5 & Func.@1 & Func.@5 & Syn. @1 & Syn. @5 & Func.@1 & Func.@5 \\
\midrule

\rowcolor{mygray}
{Base (Qwen3.5-4B)}
& Original & 38.32 & 76.50 &  7.78 & 19.16
            & 35.33 & 78.44 &  5.99 & 12.57 \\
\rowcolor{mygray}
& Mirage   & 56.89 & 94.10 & 10.78 & 25.75
            & 61.68 & 98.80 & 2.40 & 5.39 \\
{SFT}
& Original & 87.43 & 97.01 & 28.74 & 46.11
            & 91.02 & \textbf{100.00} & 11.98 & 29.34 \\
& Mirage   & 81.44 & 98.80 & 12.57 & 28.74
            & 86.83 & 100.00 &  2.99 &  4.19 \\
\rowcolor{mygray}
{Anony-mixed SFT}
& Original & \textbf{96.41} & \textbf{98.80} & 43.71 & \textbf{61.08}
            & \textbf{97.01} & 99.40 & 40.72 & 54.49 \\
\rowcolor{mygray}
& Mirage   & 88.02 & 100.00 & 14.37 & 28.14
            & 88.02 & 100.00 &  4.19 & 11.38 \\
{\quad+ORPO}
& Original & 95.81 & \textbf{98.80} & 41.92 & 60.48
            & 95.81 & 99.40 & 40.72 & 51.50 \\
& Mirage   &  8.38 & 34.73 &  0.60 &  4.79
            &  3.59 & 11.98 &  \textbf{0.00} &  \textbf{0.00} \\
\rowcolor{mygray}
{\quad+D-ORPO}
& Original & \textbf{96.41} & \textbf{98.80} & \textbf{46.11} & \textbf{61.08}
            & 96.41 & 99.40 & \textbf{42.51} & \textbf{56.29} \\
\rowcolor{mygray}
& Mirage   & \textbf{3.59} & \textbf{17.37} & \textbf{0.00} & \textbf{2.99}
            & \textbf{1.80} & \textbf{10.78} & \textbf{0.00} & \textbf{0.00} \\
\bottomrule
\end{tabular}
\end{table*}

\subsection{Training Details}
\label{sec:training_details}

We instantiate \textbf{\tool}, initialised from Qwen3.5-4B.
\tool is fine-tuned with LoRA~\cite{hu2022lora} ($r{=}16$, $\alpha_{\text{LoRA}}{=}16$, dropout\,=\,0) applied to all attention and MLP projection matrices in both the vision encoder and the language backbone.
Training proceeds in two sequential stages.

\textbf{Mixed Supervised Fine-tuning.}
The model is trained on $\mathcal{D}_{\mathrm{sft}}$ (49{,}016 samples) for 5~epochs using AdamW (8-bit) with a learning rate of $2 \times 10^{-4}$, and a maximum sequence length of 4{,}096 tokens.
We hold out the last 5{,}000 samples as a validation set, evaluate every 1{,}000 steps, and retain the checkpoint with the lowest validation loss.

\textbf{D-ORPO Alignment.}
Starting from the best SFT checkpoint, we perform one epoch of D-ORPO training on the preference pairs derived from $\mathcal{D}_{\mathrm{align}}$ (Section~\ref{sec:pairs}).
The learning rate is reduced to $5 \times 10^{-6}$.
The D-ORPO-specific hyper-parameters are set to decision window $K{=}8$, decision weight $\alpha{=}2.0$, and OR coefficient $\beta{=}0.1$.
\section{Experimental Results}
\label{sec:results}

We evaluate \tool by answering two research questions:

\begin{itemize}
  \item \textbf{RQ1}: Does \tool improve grounded code generation?
  \item \textbf{RQ2}: Does \tool refuse unreliable visual inputs without over-refusing valid diagrams?
\end{itemize}

\subsection{RQ1: Code Generation Accuracy}
\label{sec:rq1}
\begin{figure}[t]
\centering
\includegraphics[width=0.5\textwidth]{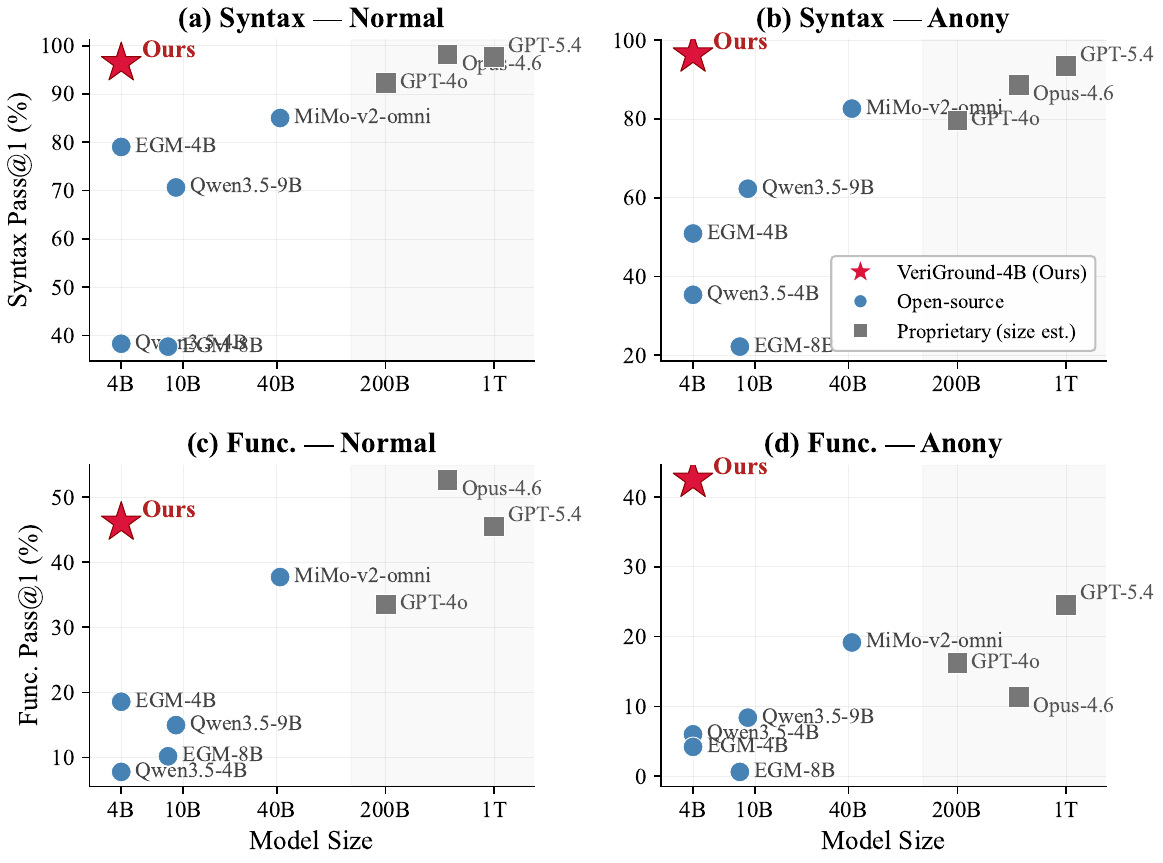}
\caption{Original-mode Pass@1 comparison on \textsc{C2VEval}.
(a--b)~Syntax; (c--d)~Functional.
Open-source models (\textcolor[HTML]{4682B4}{$\bullet$}) are at their
parameter count; MiMo-v2-omni at 42B active (1T total, MoE).
Proprietary models (\textcolor[HTML]{777777}{$\blacksquare$}, size
estimated) are in the shaded region.
\tool (\textcolor[HTML]{DC143C}{$\bigstar$}) approaches GPT-5.4
under Normal and surpasses all models under Anony.}
\label{fig:rq1_comparison}
\end{figure}

To isolate the contribution of each training component, we evaluate four cumulative ablation stages:
(i)~\emph{SFT}, fine-tuned on Normal-only data;
(ii)~\emph{Anony-mixed SFT}, fine-tuned on the merged Normal\,+\,Anony corpus;
(iii)~\emph{+ORPO}, standard ORPO alignment after Anony-mixed SFT;
(iv)~\emph{+D-ORPO} (i.e.\ \tool), decision-focused ORPO alignment after Anony-mixed SFT.
All variants are evaluated on \textsc{C2VEval} under the paired Normal/Anony $\times$ Original/Mirage protocol.

Table~\ref{tab:rq1_4b} reports the ablation results; Figure~\ref{fig:rq1_comparison} compares \tool against all baselines as a function of model size.
Based on these results, we draw three observations.

\textbf{Obs.\,1: Anony-mixed SFT is the key to visual grounding.}
Mixing anonymised training data into SFT yields substantial gains across all settings. 
Compared with Normal-only SFT, Anony-mixed SFT raises Functional Pass@1 from $28.74 \to 43.71$ (+14.97) under Normal-Original, and from $11.98 \to 40.72$ (+28.74) under Anony-Original, nearly closing the Normal--Anony gap ($\Delta$=2.99). 
This confirms that anonymisation training compels the model to ground predictions in visual topology rather than memorising identifier-to-template mappings. 
However, SFT alone cannot teach the model to \emph{refuse} when visual evidence is absent: Mirage Syntax Pass@1 remains as high as 88.02\%, indicating that the model still produces syntactically valid but functionally incorrect code without any circuit diagram.

\textbf{Obs.\,2: D-ORPO achieves the best generation--refusal trade-off.}
Standard ORPO effectively teaches refusal, reducing Mirage Syntax Pass@1 from 88.02\% to 8.38\%/3.59\% (Normal/Anony), but at the cost of generation quality: Functional Pass@1 drops from $43.71 \to 41.92$ (Normal) and $54.49 \to 51.50$ (Func.@5, Anony). 
D-ORPO resolves this tension by further reducing Mirage Syntax Pass@1 to 3.59\%/1.80\% while \emph{simultaneously} lifting Original Functional Pass@1 to 46.11\% (Normal) and 42.51\% (Anony), the highest among all ablation stages. 
The decision-focused weighting concentrates the preference signal on the generate-or-refuse boundary, preventing the over-refusal caused by length asymmetry in standard ORPO.

\textbf{Obs.\,3: A 4B model rivals frontier-scale proprietary MLLMs.}
As shown in Fig.~\ref{fig:rq1_comparison}(c), \tool achieves Functional Pass@1 of 46.11\% under Normal, approaching GPT-5.4 (45.51\%) and surpassing both GPT-4o (33.52\%) and MiMo-v2-omni (37.72\%), despite having $\sim$50--250$\times$ fewer parameters.
Under Anony (Fig.~\ref{fig:rq1_comparison}(d)), the advantage becomes decisive: \tool reaches 42.51\%, exceeding GPT-5.4 (24.55\%) by 17.96 points and Opus-4.6 (11.38\%) by 31.13 points. 
This widening gap under anonymised evaluation confirms that \tool's strength stems from genuine visual grounding rather than identifier-based shortcuts that benefit larger pretrained models.

\begin{table}[t]
\centering
\small
\caption{McNemar's test for Functional Pass@1 between \tool and baseline models on Original inputs.
$b$\,/\,$c$: samples solved only by \tool\,/\,only by the baseline.
$^{**}$\,$p < 0.01$ after Holm--Bonferroni correction; n.s.: not significant.}
\label{tab:mcnemar}
\begin{tabular}{l rrr rrr}
\toprule
& \multicolumn{3}{c}{\textbf{Normal}}
& \multicolumn{3}{c}{\textbf{Anony}} \\
\cmidrule(lr){2-4} \cmidrule(lr){5-7}
\textbf{Baseline}
  & $b$ & $c$ & $p$
  & $b$ & $c$ & $p$ \\
\midrule
MiMo-v2-omni & 32 & 18 & n.s.
              & 48 &  9 & ${<}\,0.001^{**}$ \\
GPT-4o       & 23 & 12 & n.s.
              & 50 &  6 & ${<}\,0.001^{**}$ \\
GPT-5.4      & 22 & 21 & n.s.
              & 42 & 12 & ${<}\,0.001^{**}$ \\
Opus-4.6     & 22 & 33 & n.s.
              & 56 &  4 & ${<}\,0.001^{**}$ \\
\bottomrule
\end{tabular}
\end{table}
  
\textbf{Statistical Significance.}
All models are evaluated on the same 167 \textsc{C2VEval} samples, yielding paired binary outcomes (pass/fail) per sample.
We apply McNemar's test to each \tool-vs-baseline pair: let $b$ denote the number of samples that only \tool solves and $c$ the number that only the baseline solves.
For discordant counts $b{+}c \leq 25$ we use the exact mid-$p$ variant; otherwise the continuity-corrected $\chi^{2}$ form $\chi^{2} = (|b-c|-1)^{2}/(b+c)$.
All eight $p$-values (4~baselines $\times$ 2~conditions) are corrected jointly with the Holm--Bonferroni procedure at $\alpha{=}0.05$.

Table~\ref{tab:mcnemar} reveals a clear dichotomy.
Under \textbf{Normal} evaluation, none of the four comparisons reaches significance: for instance, \tool vs.\ GPT-5.4 yields $b{=}22, c{=}21$ ($p{=}1.0$), indicating virtually identical per-sample outcomes.
This confirms that \tool, with only 4B parameters, performs \emph{on par} with frontier-scale proprietary models when standard signal names are available.
Under \textbf{Anony} evaluation, all four comparisons are highly significant ($p < 0.001$), with $b \gg c$ in every case (e.g.\ $56$ vs.\ $4$ against Opus-4.6).
While baseline models suffer severe performance degradation without meaningful identifiers (Opus-4.6: $52.69\% \to 11.38\%$; GPT-5.4: $45.51\% \to 24.55\%$), \tool retains $42.51\%$---a drop of only 3.60.
This statistically confirms that \tool's accuracy stems from genuine visual grounding rather than identifier-based shortcuts.

\begin{tcolorbox}[width=1.0\linewidth, title={Summary of RQ1}]
Anony-mixed SFT builds strong visual grounding; D-ORPO further lifts Original Functional Pass@1 to 46.11\%/42.51\% (Normal/Anony) while driving Mirage generation to near zero. 
\end{tcolorbox}

\subsection{RQ2: Refusal Reliability}
\label{sec:rq2}

A reliable circuit-to-Verilog model should generate correct code when the diagram faithfully depicts the target circuit and \emph{refuse} otherwise.
We compare +ORPO and +D-ORPO on two complementary metrics, both computed on \textsc{C2VEval}:
\begin{itemize}
  \item \textbf{False Refusal Rate (FRR~$\downarrow$)}: fraction of valid Original-diagram inputs on which the model erroneously refuses.
  \item \textbf{Refusal Rate (RR~$\uparrow$)}: fraction of blank-image (Mirage) inputs on which the model correctly refuses.
\end{itemize}
A well-calibrated model should minimise FRR while maximising RR; we additionally report Functional Pass@1 on Original inputs to confirm that refusal tuning does not degrade generation quality.

\begin{table}[t]
\centering
\small
\caption{Refusal reliability of +ORPO and +D-ORPO on \textsc{C2VEval}.
FRR is measured on Original inputs ($\downarrow$); RR on Mirage inputs ($\uparrow$); Pass@1 on Original inputs ($\uparrow$).
\textbf{Bold} marks the better value per column.}
\label{tab:rq2}
\begin{tabular}{l cc}
\toprule
\textbf{Metric} & \textbf{+ORPO} & \textbf{+D-ORPO} \\
\midrule
FRR (\%) $\downarrow$ -- Normal  &  8.98 & \textbf{1.20} \\
FRR (\%) $\downarrow$ -- Anony   &  7.19 & \textbf{0.00} \\
\midrule
RR (\%) $\uparrow$ -- Normal     & \textbf{98.20} & 92.81 \\
RR (\%) $\uparrow$ -- Anony      & \textbf{98.80} & 97.60 \\
\midrule
Func.\ Pass@1 (\%) $\uparrow$ -- Normal & 44.31 & \textbf{46.11} \\
Func.\ Pass@1 (\%) $\uparrow$ -- Anony  & 38.32 & \textbf{42.51} \\
\bottomrule
\end{tabular}
\end{table}

Table~\ref{tab:rq2} shows that standard ORPO achieves near-perfect RR ($\geq$98\%) but suffers from severe over-refusal: FRR reaches 8.98\% (Normal) and 7.19\% (Anony), meaning the model spuriously rejects roughly one in twelve valid inputs. 
D-ORPO resolves this imbalance decisively: FRR drops to 1.20\% under Normal and \emph{zero} under Anony, a reduction of 7.78 and 7.19 respectively. 
RR decreases only moderately (98.20\% $\to$ 92.81\% Normal; 98.80\% $\to$ 97.60\% Anony), remaining above 92\% across all settings and preserving reliable refusal capability. 
Crucially, the FRR reduction does not come at the expense of generation quality: D-ORPO \emph{improves} Functional Pass@1 from 44.31\% $\to$ 46.11\% (Normal) and 38.32\% $\to$ 42.51\% (Anony), confirming that decision-focused weighting prevents over-refusal from eroding the code-generation pathway.

\begin{tcolorbox}[width=1.0\linewidth, title={Summary of RQ2}]
D-ORPO reduces False Refusal Rate by up to 7.78 compared with standard ORPO (to 1.20\%/0.00\% Normal/Anony) while retaining $\geq$92\% Refusal Rate on blank images and simultaneously improving Functional Pass@1.
\end{tcolorbox}

\section{Discussion}
\label{sec:discussion}

\subsection{Hyper-parameter Sensitivity}
\label{sec:rq3}

D-ORPO introduces two hyper-parameters beyond standard ORPO: the decision window size $K$ and the decision weight $\alpha$.
We conduct ablation studies on \tool-4B, evaluating Functional Pass@1 on Original inputs, False Refusal Rate (FRR) on Original inputs, and blank-image Refusal Rate (RR) under both Normal and Anony conditions.

\subsubsection{Effect of Decision Window Size $K$}

Table~\ref{tab:ablation_K} fixes $\alpha{=}2.0$ and varies $K \in \{2, 4, 8, 16\}$.

\begin{table}[t]
\centering
\small
\caption{Effect of decision window $K$ ($\alpha{=}2.0$ fixed) on \tool-4B.
\textbf{Bold} = best per column.}
\label{tab:ablation_K}
\begin{tabular}{r cc cc cc}
\toprule
\multirow{2}{*}{$K$}
  & \multicolumn{2}{c}{\textbf{Pass@1}~$\uparrow$}
  & \multicolumn{2}{c}{\textbf{FRR}~$\downarrow$}
  & \multicolumn{2}{c}{\textbf{RR}~$\uparrow$} \\
\cmidrule(lr){2-3} \cmidrule(lr){4-5} \cmidrule(lr){6-7}
  & Norm. & Anon. & Norm. & Anon. & Norm. & Anon. \\
\midrule
 2  & 45.51 & 40.12 & 1.80 & 0.60 & 88.62 & 96.41 \\
 4  & \textbf{46.11} & \textbf{42.51} & \textbf{1.20} & 1.80 & 89.22 & 97.01 \\
 8  & \textbf{46.11} & \textbf{42.51} & \textbf{1.20} & \textbf{0.00} & \textbf{92.81} & \textbf{97.60} \\
16  & 41.32 & \textbf{42.51} & 4.19 & \textbf{0.00} & 84.43 & 95.81 \\
\bottomrule
\end{tabular}
\end{table}

$K{=}8$ achieves the best overall balance.
When $K$ is too small ($K{=}2$), the decision window covers too few tokens to capture the full decision preamble, yielding lower Pass@1 under Anony (40.12\%) and reduced RR under Normal (88.62\%).
When $K$ is too large ($K{=}16$), the window extends beyond the decision boundary into code-body tokens, causing Normal Pass@1 to drop to 41.32\% and FRR to rise to 4.19\%.
$K{=}8$ simultaneously achieves the highest Pass@1, the lowest FRR, and the highest RR across both conditions.

\subsubsection{Effect of Decision Weight $\alpha$}

Table~\ref{tab:ablation_alpha} fixes $K{=}8$ and varies $\alpha \in \{1, 2, 3, 5, 10, 20\}$.
Note that $\alpha{=}1$ recovers standard ORPO.

\begin{table}[t]
\centering
\small
\caption{Effect of decision weight $\alpha$ ($K{=}8$ fixed) on \tool-4B.
$\alpha{=}1$ corresponds to standard ORPO.
\textbf{Bold} = best per column.}
\label{tab:ablation_alpha}
\begin{tabular}{r cc cc cc}
\toprule
\multirow{2}{*}{$\alpha$}
  & \multicolumn{2}{c}{\textbf{Pass@1}~$\uparrow$}
  & \multicolumn{2}{c}{\textbf{FRR}~$\downarrow$}
  & \multicolumn{2}{c}{\textbf{RR}~$\uparrow$} \\
\cmidrule(lr){2-3} \cmidrule(lr){4-5} \cmidrule(lr){6-7}
  & Norm. & Anon. & Norm. & Anon. & Norm. & Anon. \\
\midrule
 1 & 44.31 & 38.32 & 8.98 & 7.19 & \textbf{98.20} & \textbf{98.80} \\
 2  & \textbf{46.11} & 42.51 & \textbf{1.20} & \textbf{0.00} & 92.81 & 97.60 \\
 3  & 44.31 & 41.92 & 1.80 & \textbf{0.00} & 88.02 & 97.01 \\
 5  & 42.51 & 41.92 & 1.80 & 1.20 & 80.24 & 97.01 \\
10  & 38.92 & \textbf{43.11} & 3.59 & \textbf{0.00} & 77.84 & 92.22 \\
20  & 44.91 & 41.92 & \textbf{1.20} & 2.40 & 75.45 & 95.81 \\
\bottomrule
\end{tabular}
\end{table}

The results reveal a clear trade-off.
Standard ORPO ($\alpha{=}1$) achieves near-perfect RR ($\geq$98\%) but suffers from severe over-refusal (FRR: 8.98\%/7.19\%), confirming the gradient imbalance characterised by Proposition~\ref{prop:rebalance}.
Setting $\alpha{=}2$ reduces FRR by 7.78 and 7.19 absolute points while maintaining RR above 92\%, and simultaneously lifts Pass@1 to 46.11\%/42.51\%.
Beyond $\alpha{=}2$, Normal RR declines monotonically ($88.02\% \to 80.24\% \to 77.84\% \to 75.45\%$ for $\alpha{=}3, 5, 10, 20$), while Pass@1 shows no further improvement.
$\alpha{=}2$ therefore provides the Pareto-optimal balance and is adopted as the default configuration.

In summary, $K{=}8$ and $\alpha{=}2.0$ yield the best trade-off among generation quality, false refusal, and blank-image refusal.
D-ORPO is robust to moderate perturbations of either hyper-parameter but degrades when $K$ or $\alpha$ deviates substantially from these values.

\subsection{Mismatch Refusal Analysis}
\label{sec:mismatch}

The RR metric in Section~\ref{sec:rq2} measures refusal on \emph{blank} images.
In practice, a model may also encounter \emph{mismatched} diagrams, where the provided circuit image depicts a different module than the one specified in the header.
To evaluate this scenario, we randomly construct five mismatched datasets: in each round, every test sample is paired with a circuit diagram drawn from a different instance in \textsc{C2VEval}.
We compare +ORPO and +D-ORPO on Functional Pass@1 and Mismatch Refusal Rate (MRR).

\begin{table}[t]
\centering
\small
\caption{Functional Pass@1 (Func., \%) and Mismatch Refusal Rate (MRR, \%) of +ORPO and +D-ORPO across five rounds of random image mismatch.}
\label{tab:mismatch}
\begin{tabular}{l cc cc}
\toprule
& \multicolumn{2}{c}{\textbf{+ORPO}} & \multicolumn{2}{c}{\textbf{+D-ORPO}} \\
\cmidrule(lr){2-3} \cmidrule(lr){4-5}
\textbf{Round} & Func. & MRR & Func. & MRR \\
\midrule
1 & 0.00 & 100.00 & 0.60 & 80.84 \\
2 & 0.00 & 100.00 & 1.20 & 82.04 \\
3 & 0.00 & 100.00 & 0.60 & 80.84 \\
4 & 0.00 & 100.00 & 0.60 & 83.23 \\
5 & 0.00 &  99.40 & 1.20 & 78.44 \\
\midrule
Avg. & 0.00 & 99.88 & 0.84 & 81.08 \\
\bottomrule
\end{tabular}
\end{table}

Table~\ref{tab:mismatch} shows that +ORPO refuses nearly all mismatched inputs (MRR 99.88\%), consistent with its over-refusal tendency observed in Section~\ref{sec:rq2}.
+D-ORPO exhibits a lower MRR of 81.08\%, meaning roughly one-fifth of mismatched samples are not explicitly refused.
However, the Functional Pass@1 of +D-ORPO on these mismatched inputs is only 0.84\%, virtually zero, indicating that the non-refused outputs still fail functional verification.

In other words, when +D-ORPO does not refuse a mismatched diagram, it does not hallucinate a ``correct'' implementation either; instead, the generated code is functionally incorrect and would be caught by standard testbench simulation.
This result suggests that +D-ORPO's lower refusal rate does not introduce a meaningful safety risk: the model may occasionally attempt generation on mismatched inputs, but the resulting code is unlikely to pass downstream verification, providing a practical safety net against silent misuse.

A promising direction for future work is to further improve the mismatch refusal rate without increasing the false refusal rate on valid inputs, for example through contrastive training on hard negative pairs where the diagram and header share partial but inconsistent semantics.


\subsection{Generalizability to Other Vision-to-Code Tasks}
\label{sec:generalizability}

The Mirage phenomenon reflects a general vulnerability of vision-to-code pipelines: whenever the textual prompt carries sufficient semantic cues, the model can bypass the visual input and still produce superficially correct code.
We argue that the three-pronged recipe underlying \tool, namely identifier anonymization, refusal augmentation, and D-ORPO alignment, addresses this vulnerability at a level that is largely task-agnostic.

\textbf{Anonymization generalizes to other visual DSLs.}
In UI-to-HTML generation, CSS class names and element \texttt{id}s (e.g., \texttt{nav-bar}, \texttt{login-btn}) can leak layout semantics in the same way that Verilog port names leak circuit functionality; replacing them with positional placeholders would force the model to derive structure from the mockup image.
Similarly, in chart-to-Python tasks, axis labels and legend entries often reveal the data transformation logic; anonymizing them would compel the model to read the visual encoding to reconstruct the plotting script.
The key insight is that any visual DSL whose textual interface contains semantically loaded identifiers is susceptible to shortcut exploitation, and identifier anonymization provides a systematic countermeasure.

\textbf{Refusal augmentation and D-ORPO are task-agnostic.}
Refusal augmentation requires only the ability to construct invalid input pairs, which is straightforward in any vision-to-code setting: one can pair a UI mockup with an unrelated HTML skeleton, or a chart image with a mismatched data-processing prompt.
D-ORPO's mechanism of up-weighting the first $K$ response tokens to sharpen the generate-or-refuse decision boundary operates purely on the response-prefix distribution and is independent of the downstream code language.
Together, the two components offer a transferable framework for calibrating the hallucination and over-refusal trade-off without task-specific architectural changes.

\subsection{Threats to Validity}
\label{sec:threats}

\textbf{Internal validity.}
Our ablation fixes one hyper-parameter while varying the other; joint interactions beyond the tested grid ($K \in \{2,4,8,16\}$, $\alpha \in \{1,2,3,5,10,20\}$) may exist.
All training runs use a single random seed; we mitigate this by reporting Pass@$k$ with the unbiased estimator over $n{=}5$ samples per problem.

\textbf{External validity.}
\textsc{C2VEval} covers 169 Verilog problems drawn from four public benchmarks, spanning combinational logic, sequential blocks, FSMs, and arithmetic modules.
Whether the findings generalise to other hardware description languages (e.g.\ VHDL, SystemVerilog) or larger circuit scales.
We instantiate \tool at 4B parameters; scaling behaviour at larger or smaller model sizes is unexplored.

\textbf{Construct validity.}
Functional Pass@$k$ relies on testbench simulation, which may not cover all corner cases.
Refusal detection is based on template matching against the fixed refusal format (Figure~\ref{fig:prompt_template}); free-form refusals or hedged responses are not captured by this metric.

\section{Related Work}
\label{sec:related}

\subsection{Vision-to-Code Generation}

The broader vision-to-code paradigm treats visual artifacts as \emph{visual domain-specific languages} (visual DSLs) that must be faithfully translated into executable code.
Representative tasks include UI-to-HTML synthesis~\cite{feng2021auto, xiao2024prototype2code, gui2025uicopilot, zhou2025declarui, xiao2025interaction2code, wan2025divide}, where a mockup encodes layout and interaction semantics, and Plot-to-Python generation~\cite{wu2025plot2code, zhao2025chartcoder, ouyang2026dscodebench}, where a chart encodes data transformations and rendering logic.
A common, yet underexplored, risk across these tasks is that models may exploit textual metadata (e.g., axis labels, component names) rather than genuinely parse the visual content, producing outputs that appear correct on standard benchmarks but fail on unseen inputs.

Circuit-to-Verilog generation extends this paradigm to the hardware domain, where a circuit diagram serves as a visual DSL encoding timing, topology, and bit-level semantics.
In the text-only setting, LLMs have been extensively applied to Verilog generation.
Yang et al.~\cite{yang2025large} and Pan et al.~\cite{pan2025survey} provide comprehensive surveys.
On the benchmarking side, VerilogEval~\cite{liu2023verilogeval} and its updated version~\cite{pinckney2025revisiting} evaluate functional correctness via testbench simulation; RTLLM~\cite{lu2024rtllm} and OpenLLM-RTL~\cite{liu2024openllm} target design-level RTL; ResBench~\cite{guo2025resbench} adds resource-awareness for FPGA designs; and ArchXBench~\cite{purini2025archxbench} covers complex digital subsystems.
On the model side, QiMeng-CodeV-R1~\cite{zhuqimeng} augments Verilog generation with chain-of-thought reasoning, while Yang et al.~\cite{yang2026semantic} address backdoor attacks through semantic consensus decoding.
All of these works operate in a text-only setting.

Only a limited number of studies have explored \emph{visual} inputs for Verilog generation.
VGV~\cite{wong2024vgv} prompts MLLMs with circuit screenshots and reports encouraging Thinking Vision methods, but evaluates exclusively under semantically loaded identifiers, leaving the question of genuine visual grounding open.
MGEMMV~\cite{zhang2026mgemmv} proposes a specialised framework for GEMM-oriented circuits, yet likewise does not test whether models exploit header semantics rather than reading the diagram.

Our work is the first to systematically expose and quantify this shortcut across multiple MLLMs, and to provide evidence that the reliability gap observed here reflects a general vulnerability of vision-to-code pipelines, with circuit diagrams representing the most extreme case due to the invisible and safety-critical nature of RTL semantics.

\subsection{Visual Grounding in Multimodal LLMs}

Whether MLLMs genuinely rely on visual inputs has attracted growing scrutiny.
Asadi et al.~\cite{asadi2026mirage} coin the term ``Mirage'' to describe a broad failure mode in which MLLMs appear to understand images but instead exploit spurious correlations or textual priors.
Their study spans medical imaging, scientific figures, and general-purpose vision-language tasks, demonstrating that performance often persists after visual information is removed or corrupted.

\subsection{Preference Alignment for Language Models}

Aligning language models with human preferences has evolved from reinforcement learning from human feedback (RLHF) to more efficient offline methods.
DPO~\cite{rafailov2023direct} eliminates the reward model by directly optimising a preference objective, but requires a frozen reference model that doubles memory for large multimodal architectures.
GRPO~\cite{shao2024deepseekmath} employs group-relative policy optimisation with online rollouts, achieving strong results in mathematical reasoning but incurring high sampling costs for vision-language models.
ORPO~\cite{hong2024orpo} unifies supervised learning and preference alignment in a single reference-free objective, making it well suited to resource-constrained multimodal training.

\section{Conclusion}
\label{sec:conclusion}

This paper reveals the \emph{Mirage} phenomenon in circuit-to-Verilog code generation: all eight evaluated MLLMs score equally or higher when the circuit diagram is replaced by a blank image, exploiting identifier semantics in the module header rather than the visual input.
This exposes a covert defect type in AI code generation that threatens MLLMs' trustworthiness.
We construct \textsc{C2VEval} with paired Normal/Anony variants, showing that genuine visual grounding accounts for only about 8\% of samples.
To restore grounding, we propose \tool (4B), combining identifier anonymization, refusal augmentation, and D-ORPO alignment, a transferable recipe for balancing hallucination and over-refusal.
\tool achieves 46.11\%/42.51\% Functional Pass@1 (Normal/Anony), approaching GPT-5.4 under Normal and outperforming all baselines under Anony, while keeping false refusal below 1.20\% and blank-image refusal above 92\%.

In the future, we plan to scale \tool to both lighter and larger architectures and to expand \textsc{C2VEval} with industrial-scale circuits.
We also intend to validate the transferability of the anonymization-refusal-alignment paradigm on other vision-to-code tasks to further establish its generality.

\section*{Acknowledgements}

This work was supported by National Key R\&D Program of China (No. 2024YFB4506400).
The authors would like to thank the editors and the anonymous reviewers for their insightful comments and suggestions, which can substantially improve the quality of this work.
\bibliographystyle{IEEEtran}
\bibliography{main}
 
\end{document}